\documentclass[letterpaper]{article}

\usepackage[a4paper,includefoot,margin=3cm]{geometry}
\usepackage[utf8]{inputenc}
\usepackage[T1]{fontenc}
\usepackage{microtype}
\usepackage{amsmath, amssymb, amsthm}
\usepackage{nicefrac}
\usepackage[mathic]{mathtools}
\usepackage[dvipsnames]{xcolor}
\usepackage{subcaption}
\usepackage{multirow}
\usepackage{authblk}
\usepackage{tabularx}
\usepackage{booktabs}
\usepackage{multirow}
\usepackage{dsfont}
\usepackage[]{todonotes}
\usepackage{multicol}
\usepackage{enumerate}

\usepackage{lineno}

\usepackage{tikz}
\usetikzlibrary{arrows, arrows.meta, calc, decorations.pathmorphing, decorations.pathreplacing, calligraphy, backgrounds, math, shapes, shapes.geometric, spy, positioning, quotes}
\usetikzlibrary{fit}
\usepackage{forest}
\useforestlibrary{linguistics} 

\tikzstyle{treeNode}=[circle,draw=black,inner sep=0pt,minimum size=12pt,font=\small]
\tikzstyle{steinerNode}=[circle,draw=black,fill=black,inner sep=0pt,minimum size=3pt]
\tikzstyle{triangleNode}=[isosceles triangle,rotate=270,draw=black,inner sep=0pt,minimum size=3pt]

\tikzstyle{treeA}=[thick,italyRed] 
\tikzstyle{treeB}=[thick,italyGreen] 
\tikzstyle{treeC}=[thick,blue] 
\tikzstyle{treeD}=[thick,orange] 
\tikzstyle{treeE}=[thick,black]

\usepackage{algorithm2e}

\usepackage{authblk}

\usepackage{physics}

\usepackage[pdfdisplaydoctitle,menucolor=orange!40!black,filecolor=magenta!40!black,urlcolor=blue!40!black,linkcolor=red!40!black,citecolor=green!40!black,ocgcolorlinks]{hyperref} 
\usepackage[capitalise]{cleveref}

\usepackage{thmtools}
\theoremstyle{plain}
\newtheorem{theorem}{Theorem}
\crefname{theorem}{Theorem}{Theorems}
\Crefname{theorem}{Thm{.}}{Thms{.}}
\newtheorem{lemma}[theorem]{Lemma}

\Crefname{observation}{Obs{.}}{Obs{.}}
\crefname{observation}{Observation}{Observations}

\theoremstyle{definition}

\theoremstyle{remark}

\definecolor{italyGreen}{RGB}{0, 146, 70}
\definecolor{italyRed}{RGB}{206, 43, 55}

\DeclarePairedDelimiter{\ceil}{\lceil}{\rceil}
\DeclarePairedDelimiter{\floor}{\lfloor}{\rfloor}

\DeclareMathOperator{\cost}{cost}
\DeclareMathOperator{\depth}{depth}
\DeclareMathOperator{\dist}{dist}

\newcommand{\opt}{\textnormal{\textrm{OPT}}}

\newcommand{\OptproblemDef}[3]{%
\begin{center}
	\setlength{\tabcolsep}{2pt}
	\begin{tabular}{@{}lp{12cm}@{}}
		\multicolumn{2}{@{}l}{\textsc{#1}} \\
		\textbf{Input:} & #2 \\
		\textbf{Task:} & #3 \\
	\end{tabular}
\end{center}%
}

\newcommand{\tibt}{\textnormal{\textsc{Tree in Binary Tree}}}
\newcommand{\tibtShort}{\textnormal{\textsc{TBT}}}

\newcommand{\tibtSteiner}{\textnormal{\textsc{Tree in Binary Tree With Steiner Nodes}}}
\newcommand{\tibtSteinerShort}{\textnormal{\textsc{TBT-SN}}}

\newcommand{\first}{\textnormal{\textsc{Bracket Builder}}}
\newcommand{\second}{\textnormal{\textsc{Tournament Runner}}}

\title{Designing Approximate Binary Trees for Trees}

\author{Leon Kellerhals}
\affil{Computational Intelligence Research Group, Technische Universität Clausthal}
\author{Mitja Krebs}
\affil{Intelligent Networks, Technische Universität Berlin}
\author{André Nichterlein}
\affil{Algorithmics and Computational Complexity, Technische Universität Berlin}
\author{Stefan Schmid}
\affil{Intelligent Networks, Technische Universität Berlin, Germany\\ Fraunhofer SIT}

\date{}

\begin{document}
\maketitle


\begin{abstract}
	We study the following problem that is motivated by demand-aware network design:
	Given a tree~$G$, the task is to find a binary tree~$H$ on the same vertex set.
	The objective is to minimize the sum of distances in~$H$ between vertex pairs that are adjacent in~$G$.
	We present a linear-time factor-4 approximation for this problem.
%
\end{abstract}

\section{Introduction}
\label{sec:intro}

Line embedding problems\footnote{Also known as linear arrangement or linear layout problems.} have been studied for over half a century due to their relevance in applications such as VLSI circuit design~\cite{diaz2002survey}.
Given a graph $G$, the task is to embed its vertices in a path $H$ so as to minimize a specific objective.
Studied objectives include the \emph{cutwidth},\footnote{The cutwidth is the smallest number~$k$ such that each edge in $H$ is on at most $k$ paths between vertex pairs that are adjacent in~$G$.} the \emph{bandwidth},\footnote{The bandwidth is the maximum distance in~$H$ between any two vertices adjacent in~$G$.} and the \emph{linear arrangement cost}.
The latter is the objective of interest in this work: The goal is to minimize the sum of edge lengths, i.e., the distances in $H$ between endpoints of edges in $G$.
All three objectives are NP-hard to minimize in general.
When~$G$ is a tree, minimizing bandwidth remains hard, whereas both cutwidth and linear arrangement cost can be minimized in polynomial time~\cite{chung1984optimal,CHUNG94}.

In the 1980s, variants of these problems---including embedding~$G$ into a \emph{binary tree}~$H$ instead of a path---were explored, especially for the cutwidth and bandwidth objectives~\cite{monien1985complexity,simonson1987variation}.
%
%
%
%
The corresponding problem for the linear arrangement cost objective has remained unexplored and is the focus of this work.
Specifically, we consider the following problem:
%
%

\OptproblemDef{\tibt{} (\tibtShort)}
{A tree~$G$.}
{Find a binary tree~$H$ with \(V(H)=V(G)\) that minimizes $\sum_{uv \in E(G)} \dist_H(u,v)$.}
For network embedding problems, the host graph~$H$ is usually a fixed part of the input~\cite{fischer2013vnep}.
Our setting fits better in the category of network \emph{design} problems---although earlier literature often grouped both under embedding~\cite{monien1985complexity,simonson1987variation}.

Recently, the network design view gained a new motivation by advances in optical switching technologies, which enable dynamically reconfigurable datacenter networks~\cite{cacm25,hall2021survey}.
These technologies allow the physical topology of a network to be adjusted to better align with communication demands, thereby reducing communication costs.
In this context, \(G\) represents the \emph{demand graph} (with unit-weight demands in our case), and the goal is to construct (design) a \emph{demand-aware} network \(H\) that minimizes the weighted communication distance---precisely the objective captured by \tibt{}.
While the general problem is of broader interest, our restriction to tree demands and topologies remains practically relevant:
tree-like communication patterns are common in parallel and distributed workloads, such as in machine learning~\cite{thakur2005optimization}, and tree topologies are widely used in real-world networks~\cite{al-fares2008scalable}.

\paragraph{Related work.}
%
As mentioned above, variants of \tibt{} have been studied already in the 1980s.
Monien~\cite{monien1985complexity} showed that, given a tree~$G$, finding a \emph{binary tree} that minimizes the bandwidth is NP-hard.
In contrast, Simonson~\cite{simonson1987variation} provided a linear-time algorithm that finds a binary tree minimizing the cutwidth.
However, the resulting binary tree may contain additional Steiner nodes, i.e., vertices not present in~$G$.
To the best of our knowledge, when restricted to binary trees without Steiner nodes, only a 3-approximation is known~\cite{simonson1987variation}.

More recently, demand-aware network design has attracted significant attention~\cite{cacm25,hall2021survey}.
We highlight here the most relevant contributions to our setting.

Avin et al.~\cite{avin2020demand} studied variants of \tibt{} where $G$ is sparse (i.e., with \(O(|V(G)|)\) edges) or~$H$ is a graph of bounded degree.
Their results include a constant-factor approximation algorithm for a variant of \tibt{} where \(H\) is a \emph{ternary} tree; we discuss their algorithm below in more detail.
%
More recently, Figiel et al.~\cite{opodis24dan} considered another problem variant that allows $H$ to contain Steiner nodes. 
When $G$ may be an arbitrary graph, they proved this variant to be NP-hard but admit a constant-factor approximation algorithm.
Notably, both works rely heavily on resource augmentation:
their algorithms are allowed more flexibility (additional nodes or higher degree in~$H$) than the optimum solution they are compared against.

The closest variant of \tibt{} known to be polynomial-time solvable was studied by Schmid et al.~\cite{splaynet}.
In their formulation, $G$ is an arbitrary graph with keys assigned to the vertices, and the task is to construct a binary \emph{search} tree (BST) that respects these keys.
This variant is solvable in polynomial time via dynamic programming.

Requiring~$H$ to be a BST, however, imposes structural constraints not present in \tibt.
As we show in \cref{ssec:bst}, even if~$G$ is a path on~$n$ vertices, the key assignment can force any BST solution to incur a cost that is~$\Omega(\log n)$ times higher than the optimal binary tree~$H=G$.

\paragraph{Our contribution.}
As opposed to minimizing bandwidth or cutwidth, the computational complexity of \tibt{} is unresolved.
While we are not able to determine its complexity, we provide a simple and elegant factor-4 approximation algorithm.
To the best of our knowledge, this is the first constant-factor approximation for the problem.

\begin{theorem}\label{thm:approximation}
	\tibt{} admits a linear-time \(4\)-approximation.
\end{theorem}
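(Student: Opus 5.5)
Root $G$ at an arbitrary vertex $r$. I would build $H$ recursively by replacing each vertex's star of children with a small binary structure, and then charge costs against a lower bound on $\opt$.

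\textbf{Lower bound.} In any binary tree $H$, a vertex $v$ with $\deg_G(v)=d$ has at most $3$ neighbours in $H$, at most $6$ vertices at distance $\le 2$, and in general at most $3\cdot 2^{i-1}$ vertices at distance exactly $i$. So the $G$-neighbours of $v$ incur total $H$-distance at least $L(v)$, the sum of the $d$ smallest values of this packing profile. Every edge $uv$ is counted from both endpoints, hence
\[
\opt \ \ge\ \tfrac12\sum_{v} L(v),
\qquad\text{and}\qquad
L(v)\ \ge\ d\log_2 d - O(d).
\]
Every edge also contributes at least $1$, so $\opt\ge n-1$.

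\textbf{Construction.} Process the vertices top-down. Each vertex $v$ with children $c_1,\dots,c_k$ in $G$ may use its two $H$-child slots. I arrange the children in a near-complete binary "bracket" hanging below $v$: the children themselves serve as internal nodes of the bracket. Each child $c_j$ then keeps some free $H$-slots for its own subtree. The difficulty is that a child used as an internal bracket node has lost slots. So I let each child reserve exactly one slot for its own bracket and use at most one slot for its siblings. This turns the sibling arrangement into a binary tree in which every node has one "sibling" child and one "own" child. Read in the left-child/right-sibling style, it is a balanced binary tree of depth $\approx \log_2 k$ over the children.

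With this layout, $v$ reaches $c_j$ at distance $\le \lceil\log_2(k+1)\rceil$ plus a constant. Summing, $v$'s edges cost about $k\log_2 k+O(k)$, which is comparable to $L(v)$ up to factor $2$ from the double counting. The $+O(k)$ overhead is paid by the bound $\opt\ge n-1$.

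\textbf{Main obstacle: the constant $4$.} The asymptotic comparison is easy, but getting exactly $4$ for all degrees, including small ones, requires care. Each edge $vc$ is charged from the parent side only, while the lower bound counts half from each endpoint. I would therefore prove a per-vertex inequality
\[
\text{cost of } v\text{'s child edges} \ \le\ 2L(v) + 2\cdot\#\text{children}(v).
\]
Summing over all $v$ then gives at most $2\cdot 2\,\opt$ from the first term, and the second term sums to $2(n-1)\le 2\,\opt$. To land on $4$ rather than $6$, the additive term has to be absorbed into the $L$ bound: compare the bracket depth profile ($1,2,2,3,3,3,3,\dots$) against the profile $1,1,1,2,2,2,2,2,2,\dots$ termwise. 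This gives a ratio of at most $2$, and then $\opt\ge\frac12\sum L$ yields factor $4$.

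\textbf{Running time.} Linearity is immediate, since each vertex's bracket is built in $O(\deg v)$ time.
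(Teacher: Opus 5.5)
There is a genuine gap: your construction does not achieve logarithmic depth. Under your rule, each child keeps one $H$-slot for its own bracket and uses at most one slot for its siblings. So every node in $v$'s bracket has at most one sibling among its $H$-children. The children of $v$ therefore form a single chain (two chains if $v$ is the root of $H$), and the $j$-th child sits at depth about $j$ (or $j/2$). This is exactly what the left-child/right-sibling representation produces. It is not balanced: for a star with $k$ leaves it costs $\Theta(k^2)$, while $\opt=\Theta(k\log k)$.

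Conversely, the profile $1,2,2,3,3,3,3,\dots$ that you compare against termwise requires a bracket node to have a parent and two sibling children. That uses all three of its $H$-neighbours and leaves no slot for its own children. Those own subtrees must then be pushed below the bracket, and a child lifted $t$ levels above its own subtree pays $t$ extra per own child. This displacement cost is the real difficulty. It is precisely why a balanced bracket consisting only of children and adjacency of each vertex to its own bracket cannot both hold. Your argument never bounds this cost, and the ratio-$2$ termwise comparison concerns a layout that does not exist in a binary tree without Steiner nodes.

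The missing idea is a charging scheme for this displacement. The paper first builds each bracket with Steiner nodes as internal nodes, so $\cost_H(v)\le c_v(\lceil\log c_v\rceil+1)$. It compares this per vertex with a lower bound phrased in terms of $c_v$ rather than $\deg_G(v)$. Since $\cost_H=\sum_v\cost_H(v)$ counts each edge once, no factor $2$ is lost, and this gives a $3$-approximation with Steiner nodes.

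It then resolves each Steiner node as a match: the child with fewer $G$-children is promoted, and the loser inherits the winner's subtree. Promoting $x$ over $y$ increases the cost by at most $c_x\le c_y$, which is charged to $y$. Every vertex loses at most once, so the total increase is at most $\sum_v c_v=n-1\le\opt$, giving $3+1=4$.

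Separately, your halved degree-based bound $\opt\ge\frac12\sum_v L(v)$ is valid. However, it is weaker than the paper's children-based bound at high-degree vertices; for a star it certifies only about $\frac12 k\log k$. It therefore leaves less room to absorb the displacement cost.
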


Our algorithm builds upon and extends existing approximation algorithms~\cite{avin2020demand,opodis24dan}.
Avin et al.~\cite{avin2020demand} construct for each vertex \(v\in V(G)\), a binary tree~\(T_{v}\) consisting of its children (assuming an arbitrary root if none is given) and connect \(v\) to the root of \(T_{v}\).
This results in a \emph{ternary} tree~\(H\):
a node~\(x\) may have two children within the binary subtree \(T_{v}\) corresponding to its parent \(v\) and the root of its own subtree \(T_{x}\) as a third child.

However, in our setting, where \(H\) must be a strictly \emph{binary} tree, it becomes impossible to satisfy both of the following desirable properties simultaneously:
\begin{enumerate}
	\item Each~$v \in V(G)$ induces a balanced subtree~$T_v$ consisting exclusively of the children of~$v$. \label{prop:only-children}
	\item Each~$v \in V(G)$ is either the root or directly adjacent to the root of its subtree~$T_v$.
\end{enumerate}
To address this conflict, Figiel et al.~\cite{opodis24dan} relax the first property by introducing Steiner nodes (as did Simonson~\cite{simonson1987variation} for minimizing the cutwidth): 
The subtree~$T_v$ introduced for vertex~$v$ is rooted at \(v\) and contains the children of \(v\) as leaves, while the inner nodes are Steiner nodes (see \Cref{fig:example-intro} (center) for an example).
\begin{figure}
	\begin{forest}
		[{},draw=white, s sep=12pt 
			[$v$,treeNode,for tree={s sep=8pt},edge={color=white},
				[$1$,treeNode,name=v1 [{},minimum size=0pt,inner sep=0pt,edge={color=white}]]
				[{$\ldots$},draw=white,edge={color=white}]
				[$7$,treeNode,name=v7 [{},minimum size=0pt,inner sep=0pt,edge={color=white}]]
			]
			[$v$,treeNode,for tree={s sep=8pt},edge={color=white},
				[{},steinerNode
				[{},steinerNode
					[{},steinerNode
						[$1$,treeNode [{},minimum size=0pt,inner sep=0pt [{$T_{1}$},roof]]]
						[$2$,treeNode [{},minimum size=0pt,inner sep=0pt [{$T_{2}$},roof]]]
					]
					[{},steinerNode
						[$3$,treeNode [{},minimum size=0pt,inner sep=0pt [{$T_{3}$},roof]]]
						[$4$,treeNode [{},minimum size=0pt,inner sep=0pt [{$T_{4}$},roof]]]
					]
				]
				[{},steinerNode
					[{},steinerNode
						[$5$,treeNode  [{},minimum size=0pt,inner sep=0pt [{$T_{5}$},roof]]]
						[$6$,treeNode [{},minimum size=0pt,inner sep=0pt [{$T_{6}$},roof]]]
					]
					[$7$,treeNode [{},minimum size=0pt,inner sep=0pt [{$T_{7}$},roof]]]
				]
				]
			]
			[$v$,treeNode,for tree={s sep=8pt},edge={color=white},
				[$1$,treeNode
				[$2$,treeNode
					[$3$,treeNode
						[{},minimum size=0pt,inner sep=0pt [{$T_{1}$},roof]]
						[{},minimum size=0pt,inner sep=0pt [{$T_{3}$},roof]]
					]
					[$4$,treeNode
						[{},minimum size=0pt,inner sep=0pt [{$T_{2}$},roof]]
						[{},minimum size=0pt,inner sep=0pt [{$T_{4}$},roof]]
					]
				]
				[$5$,treeNode
					[$6$,treeNode
						[{},minimum size=0pt,inner sep=0pt [{$T_{5}$},roof]]
						[{},minimum size=0pt,inner sep=0pt [{$T_{6}$},roof]]
					]
					[$7$,treeNode [{},minimum size=0pt,inner sep=0pt [{$T_{7}$},roof]]]
				]
				]
			]
		]
		\node [draw,below of=v1,yshift=1.25em,isosceles triangle,isosceles triangle apex angle=60,rotate=90] (t1) {};
		\node [draw,below of=v7,yshift=1.25em,isosceles triangle,isosceles triangle apex angle=60,rotate=90] (t2) {};
	\end{forest}
	\caption{
          Left:
          A tree \(G\) rooted at \(v\) with children \(1,\dots,7\), each with a subtree (depicted as triangles).
          Center:
          A binary tree \(H\) with Steiner nodes (small dots) violating the first property.
          Right:
          A binary tree \(H\) without Steiner nodes.
          Here, the second property is violated:
		for example, node $1$ is separated from its subtree \(T_{1}\) by two intermediate nodes, $2$ and $3$.
	}
	\label{fig:example-intro}
\end{figure}
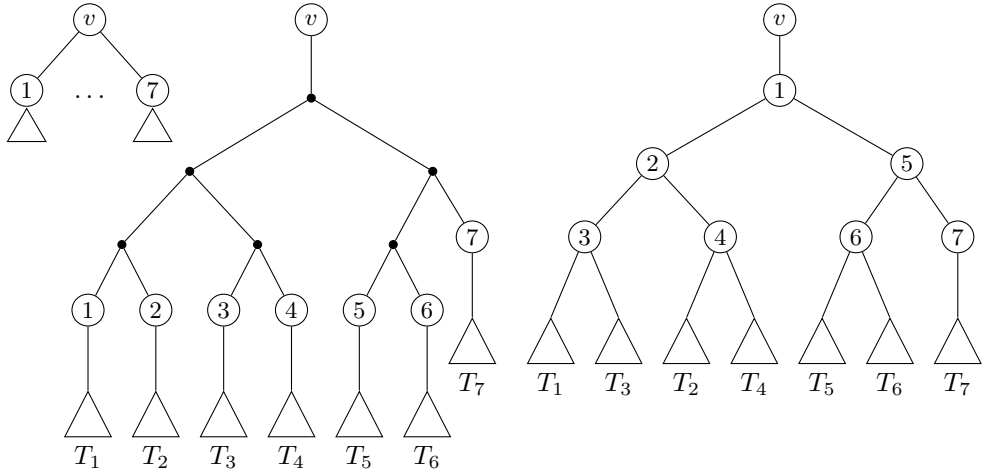
Constructing~$T_v$ as a balanced binary tree ensures that the average distance between \(v\) and its children increases only by a constant factor.
This approach---with minor modifications---forms the first phase of our algorithm, called \first{}.

Our main contribution lies in the second phase, \second{}, which eliminates all Steiner nodes, thereby re-establishing the first property at the cost of violating the second.
More precisely, each Steiner node in the subtree \(T_{v}\) is replaced by a child of \(v\), effectively detaching that child from its own subtree.
At a first glance, this
may appear to be a bad idea, as increasing the distance between a node \(x\) and its subtree \(T_{x}\) by even one incurs additional cost proportional to the number of children of~\(x\) in~\(G\) (see \Cref{fig:example-intro} (right) for an illustration).
However, we interpret each subtree \(T_{v}\) constructed by \first{} as a knockout tournament among the children of \(v\).
This novel perspective enables a nifty charging scheme that tightly bounds the additional cost caused by the elimination of Steiner nodes and yields the claimed approximation guarantee.

\paragraph{Notation.}
Let $G$ be a graph with $n$ vertices.
For a vertex $v \in V(G)$, we denote its degree by $\deg_G(v)$.
For a pair of vertices $u, v \in V(G)$, let $\dist_G(u,v)$ denote their shortest-path distance in $G$.
We omit the subscript when $G$ is clear from context.

Assume henceforth that $G$ is a tree. We fix an arbitrary root in $G$, thereby inducing a parent function $p \colon V(G) \to V(G)$.
For a vertex $v \in V(G)$, denote by $c_v$ the number of its children.

A \emph{complete} binary tree is defined as a binary tree in which each level, except possibly the last, is completely filled.
We define complete $k$-ary trees for $k> 2$ analogously.
We will use the fact that the height of a complete binary tree with \(n\) vertices is~\(\floor{\log n}\), and that of a complete binary tree with \(\ell\) leaves is \(\ceil{\log\ell}\).\footnote{Naturally, logarithms are base two unless stated otherwise.}

For a binary tree \(H\) with \(V(H)=V(G)\), we define the \emph{cost} of \(H\), i.e., the objective of \tibt{}, as~$\cost_{H}=\sum_{uv\in E(G)}\dist_{H}(u,v)$.
It is convenient to decompose this cost into contributions from each vertex \(u\in V(G)\) to its children in \(G\):
\[
  \cost_{H}(u)=\sum_{v:p(v)=u}\dist_{H}(u,v).
\]
Then the total cost can be expressed as
\(
  \cost_{H}=\sum_{u\in V(G)}\cost_{H}(u).
\)

Next to our main problem \tibt{}, we also consider a generalization
termed \tibtSteiner{} (\tibtSteinerShort).
In \tibtSteinerShort{}, the binary tree~$H$ may include additional \emph{Steiner nodes}, i.e., vertices not present in~$G$, so that $V(H) \supseteq V(G)$.

To avoid ambiguity, we refer to vertices in \(H\) as \emph{nodes} and to edges in \(H\) as \emph{links}.

\paragraph{Organization.}
First, in \cref{ssec:bst}, we discuss why binary search trees (BSTs) are not suitable for obtaining approximation algorithms.
In \cref{ssec:lower-bound}, we extend an argument from \cref{ssec:bst} to derive a lower bound for our approximation.
\cref{ssec:steiner-approx} is devoted to \first{}, our approximation algorithm for \tibtSteiner{}.
Finally, in \cref{ssec:tournament}, we discuss our main contribution: \second{}, a tournament-based approach for removing the Steiner nodes added by \first{}.

\section{How Not to Do It: The BST Approach}\label{ssec:bst}
Given that an optimal BST for a graph \(G\) with keys assigned to its vertices can be computed in polynomial time~\cite{splaynet}, one might naturally consider leveraging BSTs for approximation.
However, such an approach results in an approximation factor of~$\Omega(\log n)$.
To illustrate this, assume that \(G\) is a path with keys assigned in the following alternating order:
\[ 1,\;\, n/2+1,\;\, 2,\;\, n/2+2,\;\, \ldots,\;\, n-1,\;\, n/2,\;\, n. \]
In this construction, each vertex has a neighbor whose key differs by exactly $n/2$.
Notably, a binary tree structured identically to the path incurs a total cost of~$n-1$.
In contrast, we argue that any BST has cost \(\Omega(n \log n)\).

\begin{figure}[t!]
	\centering
	\begin{tikzpicture}
		\def\n{12} 
		\def\nHalfPO{7} 
		\def\nMO{11} 
		\foreach[count=\i] \num in {\nHalfPO,...,\n}
		{
			\node[treeNode] (v\i) at (2*\num,0) {$\i$};
			\node[treeNode] (v\num) at (2*\num + 1,0) {$\num$};
			\draw[-] (v\i) -- (v\num);
		}
		\foreach \num in {5,...,10}
		{
			\node[very thick, italyRed, draw,rounded corners, dashed,fit=(v\num)] {};
		}
		\foreach[count=\i from 2] \num in {\nHalfPO,...,\nMO}
		{
			\draw[-] (v\i) -- (v\num);
		}
	\end{tikzpicture}
	
	\bigskip
	
	\begin{forest}
		[$4$,for tree={treeNode,s sep=24pt}
			[2 [1][3]]
			[11 
				[8,for tree={s sep=12pt} [6,for tree={edge={treeA}} [5][7]] [9,for tree={edge={treeA}} [{},draw=white,edge={color=white}][10]] ]
				[12]
			]
		]
	\end{forest}
	\caption{
		An example for the construction in \cref{ssec:bst} with a path on~$n=12$ vertices (top) and a corresponding BST (bottom).
		Traversing from the root downward by always choosing the child with the larger subtree eventually leads to a subtree of size between~$n/4$ and~$n/2$.
		In this example, the subtree rooted at node~$8$ (highlighted in red) has size~$6=n/2$.
		Note that each node in this subtree has a neighbor on the path that lies outside the subtree.
	}
	\label{fig:bst-example}
\end{figure}
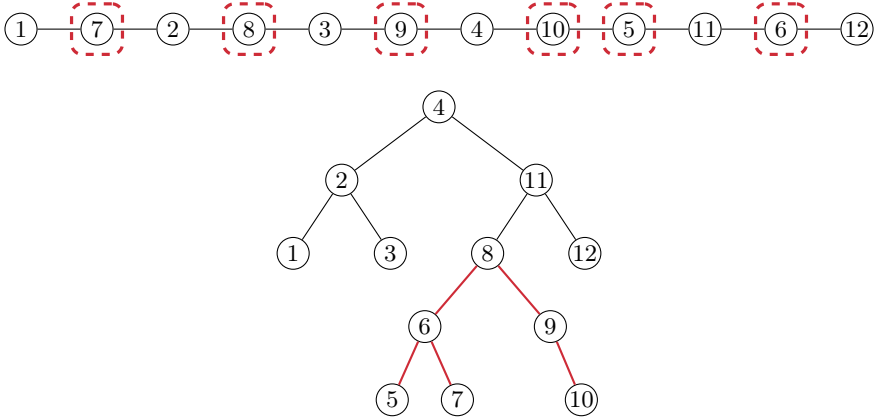

Observe that any BST with~$n$ nodes contains a subtree \(T\) of size between~$n/4$ and~$n/2$.
The keys in \(T\) necessarily form a contiguous subinterval of \(\{1,\dots,n\}\) with length at most~\(n/2\).
Due to the construction of the path, each node in \(T\) has a neighbor in \(G\) that lies outside \(T\).
Consequently, for each of the \(\Omega(n)\) such edges in \(G\), the corresponding path in the BST must traverse the root of \(T\).
Since the average distance from a node to the root in a binary tree is~$\Omega(\log n)$, the total cost is~$\Omega(n \log n)$, demonstrating that BSTs cannot achieve a constant-factor approximation in this setting.
See \cref{fig:bst-example} for an illustration.

\section{Lower Bound} \label{ssec:lower-bound}
In the above argument, we used the fact that in a binary tree with~$n$ nodes, the average distance from a node to the root is~$\Omega(\log n)$.
We now extend this observation to derive a simple lower bound on the total cost for general graphs \(H\) of bounded degree.
For any vertex \(v\in V(G)\) and any graph \(H\) with maximum degree \(\Delta\), the average distance in \(H\) between \(v\) and its neighbors in \(G\) is \(\Omega(\log_{\Delta-1}\deg_{G}(v))\).
This implies a lower bound of $\cost_{H}(v) \in \Omega(\deg_G(v) \log\deg_G(v))$ for any binary tree~$H$.
The overall lower bound is then obtained by summing over all vertices in~$G$.

To avoid double counting edges, we replace the degree~$\deg_G(v)$ with the number of children~$c_v$ of \(v\).
The following bounds suffice for our approximation:

\begin{lemma}[Lower Bound]\label{lem:lower-bound}
  Let \(H\) be a graph with \(V(H)\supseteq V(G)\) and maximum degree~\(\Delta\geq 3\), and let \(v\in V(G)\) be a vertex.
  Then, the cost of \(H\) satisfies the following bound:
  \begin{equation}
    \label{eq:lower-bound-1}
    \cost_{H}(v)\geq c_{v}h_{v}+\frac{\Delta h_{v}}{\Delta-2}-\frac{\Delta({(\Delta-1)}^{h_{v}}-1)}{{(\Delta-2)}^{2}},
  \end{equation}
  where \(h_{v}=1+\floor{\log_{\Delta-1}\ceil{c_{v}/\Delta}}\).

  Moreover, if \(c_{v}\geq \Delta(\Delta-1)\), this can be simplified as:
  \begin{equation}
    \label{eq:lower-bound-2}
    \cost_{H}(v)\geq c_{v}\left(\frac{\log c_{v}}{\log(\Delta-1)}-4\right).
  \end{equation}
\end{lemma}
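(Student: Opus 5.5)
The plan is a ball-counting argument in $H$. Since $H$ has maximum degree $\Delta$, the number of nodes at distance exactly $d\geq 1$ from $v$ in $H$ is at most $\Delta(\Delta-1)^{d-1}$. Hence the number of nodes within distance $j\geq 1$ of $v$ (excluding $v$) is at most
\[
N_j=\sum_{d=1}^{j}\Delta(\Delta-1)^{d-1}=\frac{\Delta\bigl((\Delta-1)^{j}-1\bigr)}{\Delta-2}.
\]
The $c_v$ children of $v$ in $G$ are distinct nodes of $H$ different from $v$. The cost $\cost_H(v)$ is therefore at least what we get by greedily filling the distance shells around $v$ in order of increasing distance.

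To get the closed form in \eqref{eq:lower-bound-1} without case distinctions, I will use a telescoping pointwise bound. For any integer $h\geq 1$ and any child $x$ of $v$,
\[
\dist_H(v,x)\;\geq\; h-\sum_{j=1}^{h-1}\bigl[\dist_H(v,x)\leq j\bigr].
\]
To see this, note that if $\dist_H(v,x)=d<h$ the sum counts $h-d$ indicators, and if $d\geq h$ the right-hand side is at most $h$. Summing over the $c_v$ children and bounding each $\sum_x[\dist_H(v,x)\le j]$ by $N_j$ gives
\[
\cost_H(v)\geq c_v h-\sum_{j=1}^{h-1}N_j .
\]
A routine geometric-series computation gives $\sum_{j=1}^{h-1}N_j=\frac{\Delta((\Delta-1)^h-1)}{(\Delta-2)^2}-\frac{\Delta h}{\Delta-2}$. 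This bound holds for every $h$, so plugging in $h=h_v$ yields exactly \eqref{eq:lower-bound-1}. The particular value of $h_v$ only matters for the simplification step.

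For \eqref{eq:lower-bound-2}, I will bound the three terms separately, using that $h_v-1=\floor{\log_{\Delta-1}\ceil{c_v/\Delta}}$:
\begin{enumerate}
\item[(a)] $h_v\geq \log_{\Delta-1}(c_v/\Delta)=\frac{\log c_v}{\log(\Delta-1)}-\log_{\Delta-1}\Delta$, and $\log_{\Delta-1}\Delta\leq\log_2 3<1.59$.
\item[(b)] $(\Delta-1)^{h_v}\leq(\Delta-1)\ceil{c_v/\Delta}\leq(\Delta-1)(c_v/\Delta+1)$, so the subtracted term is at most $\frac{(\Delta-1)(c_v+\Delta)}{(\Delta-2)^2}$.
\item[(c)] The assumption $c_v\geq\Delta(\Delta-1)$ gives $h_v\geq 2$. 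Then $\frac{\Delta h_v}{\Delta-2}\geq\frac{2\Delta}{\Delta-2}\geq\frac{\Delta(\Delta-1)}{(\Delta-2)^2}$, where the last inequality holds whenever $\Delta\geq 3$. This cancels the additive constant from (b).
\end{enumerate}

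What remains is $\cost_H(v)\geq c_v\bigl(\frac{\log c_v}{\log(\Delta-1)}-\log_{\Delta-1}\Delta-\frac{\Delta-1}{(\Delta-2)^2}\bigr)$. For $\Delta=3$ the loss is at most $1.59+2<4$. For $\Delta\geq4$ we have $\frac{\Delta-1}{(\Delta-2)^2}\leq 3/4$, so the loss is below $4$ as well. The main obstacle is this constant bookkeeping in the simplification, in particular making sure the floor/ceiling slack and the worst case $\Delta=3$ still fit under the constant $4$; the core counting step is straightforward.
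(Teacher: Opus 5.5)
Your proof is correct. For \eqref{eq:lower-bound-1} it takes a different and somewhat more rigorous route than the paper.

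The paper looks at the BFS tree of $v$ in $H$ and asserts that the most favorable case is a complete $(\Delta-1)$-ary tree (root with $\Delta$ children), with levels $1,\dots,h_v-1$ full and all remaining children on level $h_v$. It then evaluates the cost of that configuration using the geometric series and its derivative.

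Your indicator identity $\dist_H(v,x)\ge h-\sum_{j=1}^{h-1}[\dist_H(v,x)\le j]$ produces the same expression $c_vh-\sum_{j=1}^{h-1}N_j$ directly, as a valid lower bound for \emph{every} $h\ge 1$. So you need neither the implicit exchange argument that greedy shell-filling is optimal, nor the fact that $h_v$ is exactly the height of that filling.

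That second fact is only guaranteed for $\Delta=3$. For $\Delta=4$ and $c_v=5$, the formula gives $h_v=1$, yet at most four nodes lie at distance one from $v$. So the paper's description of the extremal tree does not literally apply there. The inequality still holds precisely because of your observation that the bound is valid for all $h$. In return, the paper's view offers the intuition that, at the correct height, the bound is the exact cost of the best-case tree and hence tight.

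For \eqref{eq:lower-bound-2}, your bookkeeping is essentially the paper's. Both arguments use $h_v\ge\log_{\Delta-1}(c_v/\Delta)$ and $(\Delta-1)^{h_v}\le(\Delta-1)(c_v/\Delta+1)$, and both end with the loss $\log_{\Delta-1}\Delta+\frac{\Delta-1}{(\Delta-2)^2}$. The differences are minor: you absorb the additive constant via $h_v\ge 2$ rather than via $\log_{\Delta-1}(c_v/\Delta)\ge 1$, and you split $\Delta=3$ from $\Delta\ge 4$ where the paper simply bounds each summand by $2$.
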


\begin{proof}
  Consider the breadth-first search (BFS) tree \(T_{v}\) for node \(v\) in \(H\), constructed by halting BFS when all children of \(v\) in \(G\) are discovered.
  Then:
  \[
    \cost_{H}(v)=\sum_{w:p(w)=v}\depth_{T_{v}}(w).
  \]
  In the most favorable case, \(T_{v}\) comprises precisely \(v\) and its children and is a complete \((\Delta-1)\)-ary tree, where the root may have \(\Delta\) children, leading to:
  \[
    \cost_{H}(v)\geq \Delta\left(\sum_{\ell=1}^{h_{v}-1}{(\Delta-1)}^{\ell-1}\ell\right)+\left(c_{v}-\Delta\sum_{\ell=0}^{h_{v}-2}{(\Delta-1)}^{\ell}\right)h_{v},
  \]
  where \(h_{v}\) is the height of \(T_{v}\) in this best-case scenario, and the second addend accounts for all remaining nodes at the last level of \(T_{v}\).
  We recall the closed-form expressions for the geometric series and its first-order derivative:
  \[
    S_{n}=\sum_{k=0}^{n}r^{k}=\frac{r^{n+1}-1}{r-1}\quad\text{and}\quad\dv{r}S_{n}=\sum_{k=1}^{n}kr^{k-1}=\frac{(n+1)r^{n}(r-1)-(r^{n+1}-1)}{{(r-1)}^{2}}.
  \]
  Substituting these into the expression for \(\cost_{H}(v)\) gives:
  \[
    \cost_{H}(v)\geq \Delta\frac{h_{v}{(\Delta-1)}^{h_{v}-1}(\Delta-2)-{(\Delta-1)}^{h_{v}}+1}{{(\Delta-2)}^{2}}+\left(c_{v}-\Delta\frac{{(\Delta-1)}^{h_{v}-1}-1}{\Delta-2}\right)h_{v}.
  \]
  Simplifying and rearranging terms yields the bound in \cref{eq:lower-bound-1}.

  To determine the height \(h_{v}\) in the best-case scenario, notice that the root \(v\) connects to at most \(\Delta\) complete \((\Delta-1)\)-ary trees.
  Therefore, \(h_{v}=1+\floor{\log_{\Delta-1}\ceil{c_{v}/\Delta}}\).
  Substituting this and bounding the floor and ceiling terms, we get:
  \[
    \cost_{H}(v)\geq c_{v}\log_{\Delta-1}\left(\frac{c_{v}}{\Delta}\right)+\frac{\Delta\log_{\Delta-1}\left(\frac{c_{v}}{\Delta}\right)}{\Delta-2}-\frac{\Delta\left({\left(\Delta-1\right)}^{1+\log_{\Delta-1}\left(\frac{c_{v}}{\Delta}+1\right)}-1\right)}{{(\Delta-2)}^{2}}.
  \]
  This can be simplified to:
  \[
    \cost_{H}(v)\geq c_{v}\left(\log_{\Delta-1}\left(\frac{c_{v}}{\Delta}\right)-\frac{\Delta-1}{{(\Delta-2)}^{2}}\right)+\frac{\Delta}{\Delta-2}\left(\log_{\Delta-1}\left(\frac{c_{v}}{\Delta}\right)-1\right).
  \]
  If \(c_{v}\geq \Delta(\Delta-1)\), the second addend is non-negative and can be omitted.
  Changing the logarithm base yields:
  \[
    \cost_{H}(v)\geq c_{v}\left(\frac{\log c_{v}}{\log(\Delta-1)}-\frac{\Delta-1}{{(\Delta-2)}^{2}}-\frac{\log \Delta}{\log(\Delta-1)}\right).
  \]
  Finally, noting that for all \(\Delta\geq 3\), both \((\Delta-1)/{(\Delta-2)}^{2}\leq 2\) and \((\log \Delta)/(\log(\Delta-1))\leq 2\), we obtain \cref{eq:lower-bound-2}, completing the proof.
\end{proof}

\section{Approximation with Steiner Nodes}\label{ssec:steiner-approx}
%
%
%
The lower bound in \cref{lem:lower-bound} (for \(\Delta=3\)) naturally motivates a simple algorithm, referred to as \first{}, for constructing a binary tree \(H\) from a given tree~$G$.
We fix an arbitrary root in \(G\), thereby inducing a parent-child relation.
For each vertex \(v\in V(G)\), we construct a subtree \(T_{v}\) as follows (for clarity, we first assume that the number of children \(c_{v}\) of \(v\) is a power of two):
We build a complete binary tree \(T'_{v}\) whose leaves correspond to the children of \(v\) in \(G\), and whose inner nodes---including the root---are Steiner nodes.
We then connect~\(v\) to the root of \(T'_{v}\), forming \(T_{v}\).\footnote{We remark that we could also identify \(v\) with the root of \(T'_{v}\). The link between $v$ and the root, however, will be useful for the tournament phase.}
An illustration is provided in \cref{fig:steiner-approx}.

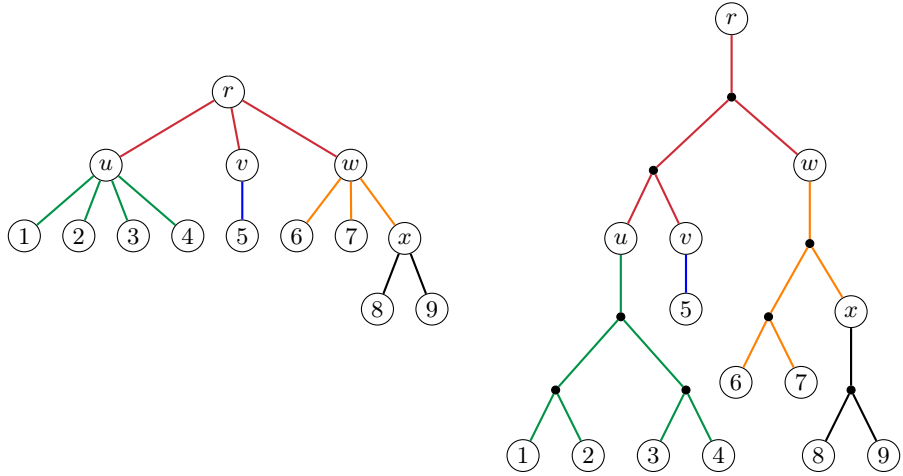
\begin{figure}[t]
	\centering
	\begin{forest}
		[{},draw=white, s sep=64pt 
		[{},draw=white, s sep=64pt,edge={color=white} 
		[$r$,for tree={treeNode,s sep=8pt},edge={color=white},name=origRoot 
			[$u$,for tree={edge={treeB}},edge={treeA},name=origU  [1][2][3][4]]
			[$v$,edge={treeA},name=origV [5,edge={treeC}]]
			[$w$,edge={treeA},name=origW [6,edge={treeD}][7,edge={treeD}][$x$,for tree={edge={treeE}},edge={treeD} [8][9]]]
		]
		]
		[$r$,for tree={treeNode,s sep=12pt},edge={color=white},name=newRoot 
			[{},steinerNode,edge={treeA}
				[{},steinerNode,edge={treeA}
					[$u$,for tree={edge={treeB}},edge={treeA},name=newU
						[{},steinerNode
							[{},steinerNode [1][2]]
							[{},steinerNode [3][4]]
						]
					]
					[$v$,edge={treeA},name=newV [5,edge={treeC}]]
				]
				[$w$,edge={treeA},name=newW 
					[{},steinerNode,edge={treeD}
						[{},steinerNode,edge={treeD} [6,edge={treeD}][7,edge={treeD}]]
						[$x$,for tree={edge={treeE}},edge={treeD} 
							[{},steinerNode,edge={treeE} [8][9]
								]
							]
					]
				]
			]
		]
		]
	\end{forest}
	\caption{
		Illustration of Algorithm \first{} for \tibtSteinerShort.
		The left shows the input tree~$G$; the right displays the binary tree~$H$ constructed by the algorithm.
                For each vertex~$v \in V(G)$, the corresponding subtree \(T_{v}\) in~$H$ is rooted at \(v\) and has the children of~$v$ in~$G$ as leaves.
		Subtrees are indicated by edge colors, and Steiner nodes are depicted as black dots.
	}
	\label{fig:steiner-approx}
\end{figure}

It is straightforward to see that this approach gives a constant-factor approximation:
By construction, each child~$w$ of a vertex~$v$ has distance~$\dist_H(v,w) = 1 + \log c_v$ from \(v\), as the complete binary tree \(T'_{v}\) has height~$\log c_v$ under the assumption that~$c_v$ is a power of two.
For arbitrary~$c_v$, we generalize the construction of \(T'_{v}\) by filling the last level so that all inner nodes have two children.
This modification yields the following bound:

\begin{lemma}[With Steiner Nodes]\label{lem:first-phase}
  Let \(H\) be the binary tree constructed by Algorithm \first{}, and let \(v\in V(G)\) be a vertex.
  Then:
  \[
    \cost_{H}(v)\leq c_{v}( \lceil \log c_{v} \rceil + 1).
  \]
\end{lemma}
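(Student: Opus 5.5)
The plan is to read off the depth of each child of \(v\) directly from the construction of \(T_v\). By construction, \(\cost_H(v)=\sum_{w:p(w)=v}\dist_H(v,w)\).

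\textbf{Step 1: distances run inside \(T_v\).} Every child \(w\) of \(v\) is a leaf of \(T'_v\), and \(v\) is attached by a single link to the root of \(T'_v\). Since \(H\) is a tree, the unique \(v\)--\(w\) path is the link from \(v\) to the root of \(T'_v\) followed by the root-to-leaf path in \(T'_v\). Hence
\[
\dist_H(v,w)=1+\depth_{T'_v}(w).
\]
This also holds although other subtrees \(T_x\) hang below the leaves of \(T'_v\), because those hang strictly below \(w\) or below other leaves and cannot shortcut the path.

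\textbf{Step 2: bound the leaf depths.} For \(c_v\ge 1\), the tree \(T'_v\) is a complete binary tree with \(c_v\) leaves, with the last level filled so that every inner node has two children. By the fact recalled in the notation paragraph, its height is \(\ceil{\log c_v}\). So every leaf depth is at most \(\ceil{\log c_v}\), and in particular \(\dist_H(v,w)\le \ceil{\log c_v}+1\).

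To make the height claim explicit, use the standard construction. Let \(k=\ceil{\log c_v}\). Take a complete binary tree of height \(k-1\) with \(2^{k-1}<c_v\) leaves (when \(k\ge 1\)), and split \(c_v-2^{k-1}\) of its leaves into two children each. This gives exactly \(c_v\) leaves, all at depth \(k-1\) or \(k\), and every inner node has exactly two children.

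The case \(c_v=1\) is handled separately. Then \(T'_v\) is a single node, which must be the child itself rather than a Steiner node, so the distance is \(1=\ceil{\log 1}+1\). If \(c_v=0\) the sum is empty and the bound is trivial.

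\textbf{Step 3: sum over children.} Summing the bound from Step 2 over all \(c_v\) children gives
\[
\cost_H(v)\le c_v(\ceil{\log c_v}+1).
\]

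\textbf{Main obstacle.} The only delicate point is the bookkeeping in Step 2: checking that "filling the last level so all inner nodes have two children" really yields leaves at depth at most \(\ceil{\log c_v}\), and handling the degenerate case \(c_v=1\). The rest is direct.
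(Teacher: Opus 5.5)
Your proposal is correct and follows essentially the same argument as the paper. Each child's distance to \(v\) is one plus its leaf depth in \(T'_{v}\), whose height is \(\ceil{\log c_{v}}\), and summing over the \(c_{v}\) children gives the bound; your leaf-splitting construction and your separate treatment of \(c_{v}=1\) (no Steiner root) just make explicit what the paper leaves implicit.
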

%
Obtaining a constant-factor approximation from this upper bound is straightforward.
Determining the precise constant, however, requires particular attention for small values of~\(c_{v}\).

\begin{lemma}\label{lem:approximation-steiner}
	\tibtSteiner{} admits a linear-time \(3\)-approximation employing \(O(n)\) Steiner nodes.
\end{lemma}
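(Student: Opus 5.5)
The plan is to compare, vertex by vertex, the upper bound of \cref{lem:first-phase} with the lower bound of \cref{lem:lower-bound} for \(\Delta=3\).

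\textbf{Applicability and reduction to a per-vertex inequality.} Any feasible solution \(H^{*}\) of \tibtSteinerShort{} is a binary tree with \(V(H^{*})\supseteq V(G)\). Every node of a binary tree has degree at most \(3\) (two children plus a parent), so \cref{lem:lower-bound} applies with \(\Delta=3\). Since \(\cost_{H}=\sum_{v}\cost_{H}(v)\) for both \(H\) and \(H^{*}\), it suffices to show, for every \(v\),
\[
  c_{v}(\lceil\log c_{v}\rceil+1)\;\leq\;3\,L(c_{v}),
\]
where \(L(c)\) is the right-hand side of \cref{eq:lower-bound-1} with \(\Delta=3\):
\[
  L(c)=ch+3h-3(2^{h}-1),\qquad h=1+\floor{\log\ceil{c/3}}.
\]
Summing this inequality over all \(v\) gives \(\cost_{H}\leq 3\cost_{H^{*}}\).

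\textbf{Large \(c_{v}\).} For \(c_{v}\geq 6=\Delta(\Delta-1)\) we may use \cref{eq:lower-bound-2}, which gives \(\cost_{H^{*}}(v)\geq c_{v}(\log c_{v}-4)\). The upper bound is at most \(c_{v}(\log c_{v}+2)\). So it suffices that \(\log c_{v}+2\leq 3\log c_{v}-12\), i.e.\ \(\log c_{v}\geq 7\), i.e.\ \(c_{v}\geq 128\).

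\textbf{Small \(c_{v}\) (the main obstacle).} For \(c_{v}<128\) the simplified bound is too lossy, so I would check the exact bound \(L(c_{v})\) directly. The case \(c_{v}=0\) is trivial. Some sample values:
\begin{itemize}
  \item \(c=1\): upper bound \(1\), and \(L=1\).
  \item \(c=2\): upper bound \(4\), and \(L=2\).
  \item \(c=3\): upper bound \(9\), and \(L=3\). This is the tight case and explains the constant \(3\).
  \item \(c=4\): upper bound \(12\), and \(L=5\).
\end{itemize}
To handle all remaining values systematically, group \(c\) by the value of \(h\), i.e.\ \(c\in[3(2^{h-1}-1)+1,\,3(2^{h}-1)]\) roughly. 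On each group \(L\) is linear in \(c\) with slope \(h\). The upper bound is linear in \(c\) with slope \(\lceil\log c\rceil+1\) on each dyadic interval. Hence it suffices to compare the two at the endpoints of the common refinement of these intervals. Since \(h\) ranges only up to about \(6\) here, this is a finite check, which I expect to pass comfortably beyond \(c=3\). This finite verification is the step that needs care. If the margin is tight somewhere, I would fall back to the slightly sharper intermediate bound appearing in the proof of \cref{lem:lower-bound}.

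\textbf{Size and running time.} For each \(v\), \(T'_{v}\) has \(c_{v}\) leaves and all inner nodes have two children, so it has \(c_{v}-1\) Steiner nodes (none when \(c_{v}\le 1\)). In total this gives fewer than \(n\) Steiner nodes. Each \(T_{v}\) is built in \(O(c_{v})\) time after rooting \(G\) by a traversal, so the whole construction runs in \(O(n)\) time.
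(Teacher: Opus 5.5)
Your proposal follows the paper's proof: the same per-vertex comparison of \cref{lem:first-phase} with \cref{lem:lower-bound} at $\Delta=3$, the same threshold $c_v\ge 128$ for \eqref{eq:lower-bound-2}, and a finite check below it, which the paper does by listing all $127$ values in \cref{tab:bound-ratios}, while your piecewise-linear endpoint reduction (valid, since $3L(c)-c(\lceil\log c\rceil+1)$ is affine on each piece of the common refinement) cuts it to about twenty values. When you actually run that check, note that it is not comfortable beyond $c=3$: $c=9$ is also an equality case ($45=3\cdot 15$) and $c=5$ is nearly one ($20$ vs.\ $21$), but the inequality holds at every endpoint, so no fallback is needed.
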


\begin{proof}
  Let \opt{} be an optimum solution.
  We combine the upper bound from \cref{lem:first-phase} with the lower bounds from \cref{lem:lower-bound}.
  For any vertex \(v\) with \(c_{v}\geq 128\), we have:
  \[
    \cost_{H}(v)\leq c_{v}(\log c_{v}+2)\leq c_{v}(\log c_{v}+2)+2c_{v}\log c_{v}-14c_{v}=3c_{v}(\log c_{v}-4)\overset{\eqref{eq:lower-bound-2}}{\leq} 3\cost_{\opt}(v).
  \]
  For vertices with fewer children, i.e., \(c_{v}<128\), the following bound can be verified using \cref{tab:bound-ratios} in the appendix:
  \[
    \cost_{H}(v)\leq c_{v}(\ceil{\log c_{v}}+1)\leq 3\left(c_{v}h_{v}+3h_{v}-3(2^{h_{v}}-1)\right)\overset{\eqref{eq:lower-bound-1}}{\leq} 3\cost_{\opt}(v),
  \]
  where \(h_{v}=1+\floor{\log\ceil{c_{v}/3}}\).
  Summing over all vertices yields the approximation factor.

  As each introduced Steiner node has two children, the number of introduced Steiner nodes is bounded by the number of leaves in~$G$.
  The linear running time then follows trivially.
\end{proof}


\section{Running the Tournament: Approximation without Steiner Nodes}\label{ssec:tournament}
We now describe how to remove the Steiner nodes from the binary tree \(H\) constructed by Algorithm \first{}.
Let \(v\in V(G)\) be a vertex, and let \(T_{v}\) be the corresponding subtree in \(H\).
Recall that \(T_{v}\) is rooted at \(v\), its leaves correspond to the children of \(v\) in~\(G\), and its inner nodes are Steiner nodes.

Our algorithm, which we call \second{}, runs a knockout tournament for each subtree \(T_{v}\).
In this analogy, \(T_{v}\) serves as the tournament bracket:
the leaves represent the players, and each Steiner node corresponds to a \emph{match} between two players.

The \emph{winner} of a match between players \(x\) and \(y\) is the player with fewer children in \(G\), with ties broken arbitrarily, i.\,e., \(x\) wins only if \(c_{x}\le c_{y}\).
The winner replaces the Steiner node and advances to the next round, while the loser is eliminated from the tournament.
The loser inherits the subtree previously attached to the winner and is charged for the increase in cost caused by the winner's advancement.
Refer to \cref{fig:steiner-nodes-removal} for an illustration of a tournament and \cref{fig:example-tournament} for the result when applied to the binary tree of \cref{fig:steiner-approx}.

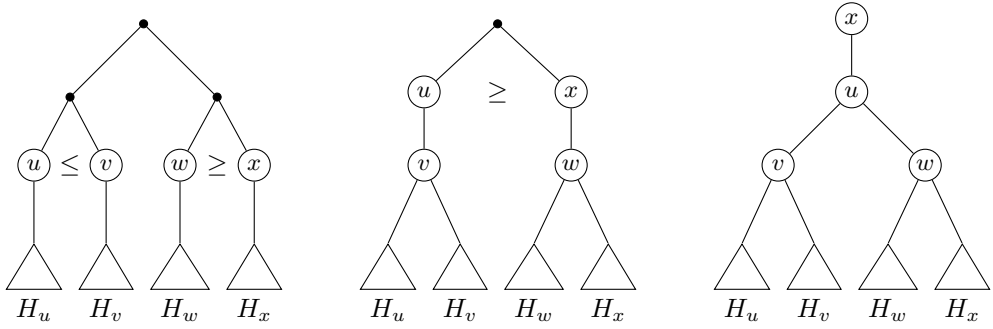
\begin{figure}[t!]
	\centering
	\begin{forest}
			[{},steinerNode
				[{},steinerNode
					[$u$,treeNode,name=u  [{},minimum size=0pt,inner sep=0pt [{$H_u$},roof]]]
					[$v$,treeNode,name=v [{},minimum size=0pt,inner sep=0pt [{$H_v$},roof]]]
				]
				[{},steinerNode
					[$w$,treeNode,name=w [{},minimum size=0pt,inner sep=0pt [{$H_w$},roof]]]
					[$x$,treeNode,name=x [{},minimum size=0pt,inner sep=0pt [{$H_x$},roof]]]
				]
			]
		{\node at ($(u)!0.5!(v)$) {$\le$};}
		{\node at ($(w)!0.5!(x)$) {$\ge$};}
	\end{forest}\hfil
	\begin{forest}
			[{},steinerNode
				[$u$,treeNode,name=u
					[$v$,treeNode 
						[{},minimum size=0pt,inner sep=0pt [{$H_u$},roof]]
						[{},minimum size=0pt,inner sep=0pt [{$H_v$},roof]] 
					]
				]
				[$x$,treeNode,name=x
					[$w$,treeNode 
						[{},minimum size=0pt,inner sep=0pt [{$H_w$},roof]]
						[{},minimum size=0pt,inner sep=0pt [{$H_x$},roof]]
					]
				]
			]
		{\node at ($(u)!0.5!(x)$) {$\ge$};}
	\end{forest}\hfil
	\begin{forest}
			[$x$,treeNode
				[$u$,treeNode
					[$v$,treeNode 
						[{},minimum size=0pt,inner sep=0pt [{$H_u$},roof]]
						[{},minimum size=0pt,inner sep=0pt [{$H_v$},roof]] 
					]
					[$w$,treeNode 
						[{},minimum size=0pt,inner sep=0pt [{$H_w$},roof]]
						[{},minimum size=0pt,inner sep=0pt [{$H_x$},roof]]
					]
				]
			]
	\end{forest}\hfil
	\caption{
		Illustration of Algorithm \second{}.
		Each Steiner node (small black dot) hosts a match between its two child nodes.
		The child with smaller degree in~$G$ wins and replaces the Steiner node, while the loser inherits the winner's subtree.
		In the leftmost tree, the siblings~$u,v,w,x$ satisfy~$\deg_G(u) \le \deg_G(v)$, $\deg_G(x) \le \deg_G(w)$, and~$\deg_G(x) \le \deg_G(u)$.
		The central tree shows the state after one round: $v$ and~$w$ are eliminated, while~$u$ and~$x$ proceed.
		The final tree shows~$x$ winning over~$u$.
	}
	\label{fig:steiner-nodes-removal}
\end{figure}

\begin{figure}[t!]
	\centering
	\begin{forest}
		[{},draw=white, s sep=16pt 
		[$r$,for tree={treeNode,s sep=8pt},edge={color=white},name=origRoot 
			[$u$,for tree={edge={treeB}},edge={treeA},name=origU  [1][2][3][4]]
			[$v$,edge={treeA},name=origV [5,edge={treeC}]]
			[$w$,edge={treeA},name=origW [6,edge={treeD}][7,edge={treeD}][$x$,for tree={edge={treeE}},edge={treeD} [8][9]]]
		]
		[$r$,for tree={treeNode,s sep=12pt},edge={color=white},name=steinerRoot 
			[{},steinerNode,edge={treeA}
				[{},steinerNode,edge={treeA}
					[$u$,for tree={edge={treeB}},edge={treeA},name=newU
						[{},steinerNode
							[{},steinerNode [1][2]]
							[{},steinerNode [3][4]]
						]
					]
					[$v$,edge={treeA},name=newV [5,edge={treeC}]]
				]
				[$w$,edge={treeA},name=newW 
					[{},steinerNode,edge={treeD}
						[{},steinerNode,edge={treeD} [6,edge={treeD}][7,edge={treeD}]]
						[$x$,for tree={edge={treeE}},edge={treeD} 
							[{},steinerNode,edge={treeE} [8][9]
								]
							]
					]
				]
			]
		]
		[$r$,for tree={treeNode,s sep=12pt},edge={color=white},name=finalRoot 
			[$v$,edge={treeA}
				[$w$,edge={treeA}
					[$u$,for tree={edge={treeB}},edge={treeA}
						[1,
							[3 [2][4]]
						]
						[5,edge={treeC}]
					]
					[6,edge={treeD}
						[$x$,for tree={edge={treeE}},edge={treeD} 
							[7,edge={treeD}]
							[8,[9]]
						]
					]
				]
			]
		]
		]
	\end{forest}
	\caption{
		Result of applying Algorithms \first{} and \second{} to the instance in \cref{fig:steiner-approx}.
		Ties are resolved lexicographically by the name (or number) of the node.
	}
	\label{fig:example-tournament}
\end{figure}
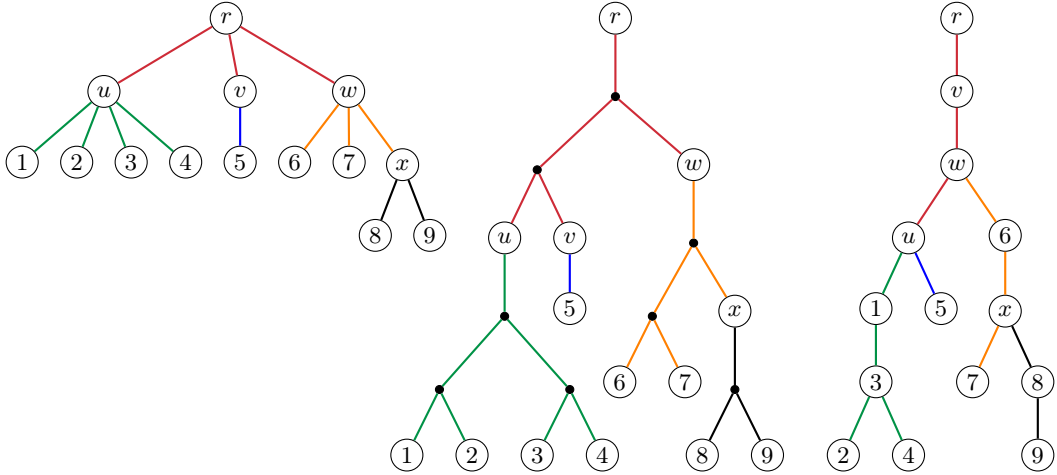

We now analyze the amount charged to a losing player.
As indicated above, whenever a winner \(x\) advances, the distance to each of its children increases by one, while all other distances do not increase.
The resulting cost increase is at most \(c_{x}\).
Since \(x\) wins only if~\(c_{x}\le c_{y}\), this increase can be fully charged to the losing node \(y\), which is charged \(c_{y}\).
Crucially, a player can lose at most one match and is thus charged at most once.

Notably, the subtree resulting from such a tournament can be constructed directly, without simulating the tournament explicitly:
Given a vertex, we construct a complete binary tree over its children in \(G\), placing those with smaller degree closer to the root.

Before establishing the approximation guarantee, we state three easy-to-see invariants of our algorithm.

\begin{lemma}[Invariants]\label{lem:invariants}
  The following invariants hold at the end of Algorithm \first{} and are maintained throughout Algorithm \second{}:
  \begin{enumerate}[(i)]
  \item\label{itm:invariant-1}
    Any Steiner node has two children.
  \item\label{itm:invariant-2}
    For any vertex \(v\in V(G)\), its parent in \(G\) is an ancestor of \(v\) in \(H\).
  \item\label{itm:invariant-3}
    For any vertex \(v\in V(G)\), if \(v\) is the child of a Steiner node in \(T_{u}\), then \(v\)'s parent in \(G\) is \(u\), and in \(H\), \(v\) either is a leaf or has a single child.
  \end{enumerate}
\end{lemma}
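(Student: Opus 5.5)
We argue by induction over the course of the algorithms. First we check that all three invariants hold when \first{} finishes. Then we check that a single match, viewed as one local operation of \second{}, preserves each of them. We fix the precise form of a match as follows. A Steiner node \(s\) in \(T_u\) has two children \(x\) and \(y\). Suppose \(x\) wins. Then \(x\) takes the position of \(s\), its parent becomes the parent of \(s\), and its only child in \(H\) becomes \(y\). The node \(y\) keeps its own child, if it has one, and additionally receives the former child of \(x\), i.e., the subtree that was hanging below \(x\). Finally \(s\) is deleted. For this operation to be well defined we need that, at the moment a match is played, both \(x\) and \(y\) have at most one child. We therefore prove invariant~(iii) together with the other two, because it is exactly what makes each match well defined.

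\textbf{Base case.} After \first{}, every Steiner node is an inner node of some \(T'_u\), and \(T'_u\) is a complete binary tree in which every inner node has two children. This gives~(i). For~(ii), every \(v\neq\) root lies as a leaf in \(T_{p(v)}\), which is rooted at \(p(v)\). Hence \(p(v)\) is an ancestor of \(v\) in \(H\). For~(iii), the only \(G\)-vertices whose parent in \(H\) is a Steiner node in \(T_u\) are the leaves of \(T'_u\), and these are exactly the children of \(u\). Each such \(v\) has at most one child in \(H\): the root of \(T'_v\), or its only child \(w\) when \(c_v=1\). We have to make sure that the case \(c_v=1\) is not accidentally turned into a Steiner root. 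If it is, \(v\) still has exactly one child in \(H\).

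\textbf{Inductive step.} We consider a match at a Steiner node \(s\) of \(T_u\) whose children \(x\) and \(y\) are both \(G\)-vertices. Matches are played bottom-up, so this is the situation in which a match happens. By~(iii), \(p(x)=p(y)=u\), and each of \(x\) and \(y\) has at most one child. Afterwards \(x\) has exactly one child, namely \(y\). The node \(y\) has at most two children, and it is no longer a child of a Steiner node, so~(iii) imposes no requirement on it. If the parent of \(s\) is a Steiner node, that parent is in \(T_u\), and its new child \(x\) has \(p(x)=u\) and a single child, so~(iii) holds for \(x\). Otherwise the parent of \(s\) is \(u\) itself, and~(iii) says nothing about \(x\). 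The remaining Steiner nodes keep two children each, since \(x\) simply replaces \(s\) as a child, so~(i) is preserved.

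For~(ii), the ancestors of any node outside \(\{x,y,s\}\) change only by losing \(s\) and possibly gaining \(x\). For every vertex \(w\) in the subtree below \(x\) or below \(y\), the node \(x\) is still an ancestor, and \(u\) is still an ancestor because \(u\) is an ancestor of \(s\). Every \(w\) below \(x\) or \(y\) has its \(G\)-parent either inside that subtree or equal to \(x\) or \(y\). This holds because the subtree of a \(G\)-vertex \(z\) in \(H\) consists of \(z\)'s \(G\)-descendants: the subtree \(T_z\) together with the trees recursively attached below it. The node \(y\) moves below \(x\) but stays below \(u=p(y)\), and \(x\) stays below \(u\). The children of \(x\) in \(G\) now sit below \(y\), which lies below \(x\), so they keep \(x\) as an ancestor.

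The main obstacle is this last claim. Moving the subtree of \(x\) under the loser \(y\) must not separate any vertex from its \(G\)-parent, and this relies on the subtree of \(x\) containing only \(G\)-descendants of \(x\). I would handle it by proving an auxiliary statement in the same induction: for every \(G\)-vertex \(z\), the subtree of \(H\) rooted at \(z\) consists of \(z\) and the \(G\)-descendants of \(z\), together with Steiner nodes, except that it may also contain siblings of \(z\) placed below \(z\) after \(z\) has won matches. All of these are still \(G\)-descendants of \(p(z)\), and that is enough to verify~(ii).
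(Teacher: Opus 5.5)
\textbf{Gap in the argument for (ii).} Your checks of (i) and (iii) are sound in substance. The argument for preserving (ii), however, rests on false claims, and the auxiliary invariant you propose as the repair is also false.

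The assertion that every $w$ below $x$ or $y$ has its $G$-parent inside that subtree or in $\{x,y\}$ fails as soon as $x$ has won an earlier match in $T_u$. The sibling it beat lies below $x$, but its $G$-parent is $u$.

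More seriously, the subtree of a $G$-vertex $z$ is not made up of $G$-descendants of $z$ plus siblings that $z$ defeated. Right after $x$ beats $y$, the loser $y$ receives the former single child of $x$, and all $G$-children of $x$ lie below that child. So for $c_x\ge 1$, the subtree of $y$ contains vertices that are neither $G$-descendants nor siblings of $y$. The weaker statement that everything below $z$ is a $G$-descendant of $p(z)$ may hold, but it says nothing about where $p(w)$ sits relative to $w$ after the move, so it does not give (ii). The step you single out as the main obstacle is therefore not resolved by your plan.

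\textbf{How to fix it.} No description of subtree contents is needed. Let $x'$ be the former child of $x$. A match changes ancestor sets only as follows: $x$ loses $s$; $y$ loses $s$ and gains $x$; nodes in the subtree of $x'$ lose $s$ and gain $y$; nodes below $y$ lose $s$ and gain $x$; all other nodes are unaffected. Hence no node ever loses a $G$-vertex ancestor. Since $p(w)$ was an ancestor of $w$ by the induction hypothesis, it still is. Your first sentence on (ii) nearly says this; combining it with the induction hypothesis finishes (ii) immediately.

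\textbf{Two smaller points in (iii).}
\begin{enumerate}
\item When the parent of $s$ is not a Steiner node, it need not be $u$, unless you fix an order in which $T_u$ is completed before $u$ plays in $T_{p(u)}$. If $u$ has already beaten a sibling while the root Steiner node of $T'_u$ still existed, that root was handed to the loser. This is harmless, since any non-Steiner parent makes (iii) vacuous for $x$, but the case should read ``a $G$-vertex''.
\item You should justify that a Steiner parent of $s$ lies in $T_u$. A Steiner node is played only when both its children are $G$-vertices, so non-root Steiner nodes of $T'_u$ keep their original parents. The root of $T'_u$ always has a $G$-vertex parent, because a node whose parent is a $G$-vertex can only be moved to another $G$-vertex, namely a loser.
\end{enumerate}
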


\begin{lemma}[Removing Steiner Nodes]\label{lem:second-phase}
  The total increase in cost incurred by Algorithm \second{} is at most \(n-1\).
\end{lemma}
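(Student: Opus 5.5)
We analyze \second{} one match at a time: we bound the cost change of a single match locally, charge it to the loser, and sum over all matches. A match happens at a Steiner node \(s\) in some \(T_u\). Its two children are players \(x\) and \(y\), both children of \(u\) in \(G\) by \cref{lem:invariants}(\ref{itm:invariant-3}). Each of them is either a leaf of \(H\) or has a single child, which is the root of its pending subtree. Suppose \(x\) wins, so \(c_x\le c_y\). Then \(x\) takes the place of \(s\), \(y\) becomes \(x\)'s child, and \(x\)'s former subtree is reattached below \(y\), where \(y\) previously had at most one child. By \cref{lem:invariants}(\ref{itm:invariant-1}) and (\ref{itm:invariant-3}), the result is again a binary tree and the invariants persist, which the paper treats as easy to see.

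\textbf{Step 1: the match changes only distances to \(x\)'s children.} Deleting \(s\) and promoting \(x\) leaves every depth unchanged, except those of the nodes in \(x\)'s old pending subtree, which move one level down. Consider an edge \(ab\) of \(G\) with \(a=p(b)\). By \cref{lem:invariants}(\ref{itm:invariant-2}), \(a\) is an ancestor of \(b\) in \(H\), so \(\dist_H(a,b)=\depth(b)-\depth(a)\). We go through the cases.
\begin{itemize}
\item If both \(a\) and \(b\) lie outside the moved set, or both inside it, the difference of depths is unchanged.
\item If only \(b\) lies inside the moved set, then \(b\) descends from \(x\) in \(H\) but is not \(x\). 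Since \(a\) is an ancestor of \(b\) outside the moved set, \(a\) is either \(x\) or an ancestor of \(x\). In the latter case \(a\) is an ancestor of \(u\), because \(x\)'s pending subtree contains only descendants of \(x\) in \(G\). Hence \(a=x\), and the distance grows by exactly \(1\).
\item The edge \(uy\) keeps its length, since \(s\) is replaced by \(x\) on the path. The edge \(ux\) gets shorter by one.
\end{itemize}
So the cost increases by at most \(c_x\le c_y\).

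\textbf{Step 2: summing the charges.} We charge the increase of the match to the loser \(y\). Each player loses at most one match, because a loser leaves the tournament. The root of \(G\) never plays. Hence the total increase is at most \(\sum_{y\neq \text{root}} c_y\). This is at most \(\sum_{v} c_v = n-1\), the number of edges of \(G\).

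\textbf{Main obstacle.} The delicate part is Step 1: we must make sure that no edge other than those from \(x\) to its children gets longer. In particular, edges from a vertex inside \(x\)'s pending subtree to its own descendants must keep their length. This holds because the whole pending subtree is shifted down rigidly, which relies on invariant (\ref{itm:invariant-2}) holding at every stage of the tournament. Ensuring this is where the care with the invariants of \cref{lem:invariants} is needed.
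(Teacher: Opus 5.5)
Your overall route is the paper's: bound the cost change of one match by \(c_x\le c_y\), charge it to the loser, note that every vertex loses at most once, and sum to at most \(\sum_v c_v=n-1\). Your Step~2 is fine. The problem is in Step~1, the part you yourself call delicate, where the depth bookkeeping is wrong.

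When \(x\) replaces \(s\), the node \(x\) moves \emph{up} one level. The node \(y\) keeps its depth, since it is now a child of \(x\), which sits at the former depth of \(s\). Both pending subtrees hang below \(y\) at exactly their old depths. Nothing is shifted down: the only node whose depth changes is \(x\). So your claim that all depths are unchanged except those in \(x\)'s old pending subtree is false. It also contradicts your third bullet, where \(ux\) shrinks although neither endpoint lies in your moved set.

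The justification in your second bullet also fails. You claim that \(x\)'s pending subtree contains only \(G\)-descendants of \(x\). This is false as soon as \(x\) has already won a match. If \(x\) beat \(y'\) at a lower Steiner node, then \(x\)'s single child is \(y'\), a \(G\)-child of \(u\). The edge \(uy'\) then has its lower endpoint in your moved set and its upper endpoint \(u\neq x\) outside it. If that subtree really moved down, \(uy'\) would lengthen and would not be covered by the charge \(c_x\). In fact \(uy'\) keeps its length, precisely because \(y'\) does not move.

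The repair is short and makes the case analysis unnecessary. By invariant~(ii), \(\dist_H(a,b)=\depth(b)-\depth(a)\) for every edge of \(G\) with \(a=p(b)\), both before and after the match. Only \(\depth(x)\) changes, and it decreases by one. Hence exactly the edges of \(G\) incident to \(x\) change length: the \(c_x\) edges to the children of \(x\) grow by one, and the edge \(ux\) shrinks by one. The increase per match is therefore \(c_x-1\le c_y\), and your Step~2 then completes the proof as in the paper.
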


\begin{proof}
  Consider a match between sibling nodes \(x\) and \(y\), with \(x\) winning.
  By \cref{lem:invariants}, advancing \(x\) increases its distance to each of its children by one, while all other distances do not increase.
  Hence, the cost increase due to the match is at most \(c_{x}\leq c_{y}\).

  Once a node loses a match, it becomes the child of a non-Steiner node and cannot participate in subsequent matches.
  As a result, each node \(v\) is charged \(c_{v}\) at most once.
  Consequently, the total cost increase for eliminating Steiner nodes in a subtree \(T_{u}\) is at most \(\sum_{v:p(v)=u}c_v\).
  Summing over all vertices completes the proof.
\end{proof}

\begin{proof}[Proof of \Cref{thm:approximation}]
  The approximation factor follows from \cref{lem:approximation-steiner}, \cref{lem:second-phase} and the fact that any graph \(H\) must incur a cost of at least \(|E(G)|=n-1\).
  The~$O(n)$ Steiner nodes result in~$O(n)$ matches.
  With~$O(1)$ time per match the linear running time follows.
\end{proof}

\section{Conclusion}
\label{sec:concl}
While our approximation for \tibt{} is both simple and concise, there remains room for improving the approximation factor.
We briefly outline three directions for refinement:
First, the upper bound on the cost incurred by Algorithm \first{} (\cref{lem:approximation-steiner}) assumes that all leaves are at the last level of the tree.
However, unless the number of children is a power of two, some leaves will in fact reside at the second-to-last level, leading to a tighter bound.
Second, in the current design of Algorithm \second{}, the overall tournament winner has only a single child.
Granting the winner a second child can reduce the cost of the solution.
Third, the current analysis of the second phase overlooks that the winner of a match moves closer to its parent in \(G\).
We conjecture that incorporating this detail in the analysis could reduce the additional cost from~$n-1$ to the number of leaves in~$G$.
Combining these observations with case distinctions for low-degree vertices in~$G$ (which are the current bottleneck in our analysis, see \cref{tab:bound-ratios} in the appendix) promises to reduce the approximation factor.
However, such refinements may come at the cost of the algorithm's elegance and substantially increase the complexity of the analysis.

Another open question is, whether (under standard complexity assumptions) one can improve the lower bound for the approximation factor; currently it is~$1$.

\bibliographystyle{plainurl}
\bibliography{References.bib}

\newpage

\appendix

\section{Appendix}

\begin{table}[h!]
	\caption{
		Bound ratios.
		For a number \(c_{u}\) of children, \(\textrm{LB}=c_{u}h_{u}+3h_{u}-3(2^{h_{u}}-1)\) is the lower bound on \(\cost_{H}\) proved in \cref{lem:lower-bound}, whereas \(\textrm{UB}=c_{u}(\ceil{\log c_{u}}+1)\) is the upper bound shown in \cref{lem:approximation-steiner}.
		\(\textrm{Ratio}=\textrm{UB}/\textrm{LB}\).
		Values are truncated after the second decimal digit.
	}\label{tab:bound-ratios}
	\small
	\begin{tabular}{rrrrp{2em}rrrrp{2em}rrrr}
		\toprule
		$c_{u}$ & LB & UB & Ratio&&$c_{u}$ & LB & UB & Ratio&&$c_{u}$ & LB & UB & Ratio\\
		\midrule
		 1	&1	&1	&1\phantom{.00}	&&44	&143	&308	&2.15		&&87	&357	&696	&1.94\\
		 2	&2	&4	&2\phantom{.00}	&&45	&147	&315	&2.14		&&88	&362	&704	&1.94\\
		 3	&3	&9	&3\phantom{.00}	&&46	&152	&322	&2.11		&&89	&367	&712	&1.94\\
		 4	&5	&12	&2.4\phantom{0}	&&47	&157	&329	&2.09		&&90	&372	&720	&1.93\\
		 5	&7	&20	&2.85		&&48	&162	&336	&2.07		&&91	&377	&728	&1.93\\
		 6	&9	&24	&2.66		&&49	&167	&343	&2.05		&&92	&382	&736	&1.92\\
		 7	&11	&28	&2.54		&&50	&172	&350	&2.03		&&93	&387	&744	&1.92\\
		 8	&13	&32	&2.46		&&51	&177	&357	&2.01		&&94	&393	&752	&1.91\\
		 9	&15	&45	&3\phantom{.00}	&&52	&182	&364	&2\phantom{.00}	&&95	&399	&760	&1.90\\
		10	&18	&50	&2.77		&&53	&187	&371	&1.98		&&96	&405	&768	&1.89\\
		11	&21	&55	&2.61		&&54	&192	&378	&1.96		&&97	&411	&776	&1.88\\
		12	&24	&60	&2.5\phantom{0}	&&55	&197	&385	&1.95		&&98	&417	&784	&1.88\\
		13	&27	&65	&2.40		&&56	&202	&392	&1.94		&&99	&423	&792	&1.87\\
		14	&30	&70	&2.33		&&57	&207	&399	&1.92		&&100	&429	&800	&1.86\\
		15	&33	&75	&2.27		&&58	&212	&406	&1.91		&&101	&435	&808	&1.85\\
		16	&36	&80	&2.22		&&59	&217	&413	&1.90		&&102	&441	&816	&1.85\\
		17	&39	&102	&2.61		&&60	&222	&420	&1.89		&&103	&447	&824	&1.84\\
		18	&42	&108	&2.57		&&61	&227	&427	&1.88		&&104	&453	&832	&1.83\\
		19	&45	&114	&2.53		&&62	&232	&434	&1.87		&&105	&459	&840	&1.83\\
		20	&48	&120	&2.5\phantom{0}	&&63	&237	&441	&1.86		&&106	&465	&848	&1.82\\
		21	&51	&126	&2.47		&&64	&242	&448	&1.85		&&107	&471	&856	&1.81\\
		22	&55	&132	&2.4\phantom{0}	&&65	&247	&520	&2.10		&&108	&477	&864	&1.81\\
		23	&59	&138	&2.33		&&66	&252	&528	&2.09		&&109	&483	&872	&1.80\\
		24	&63	&144	&2.28		&&67	&257	&536	&2.08		&&110	&489	&880	&1.79\\
		25	&67	&150	&2.23		&&68	&262	&544	&2.07		&&111	&495	&888	&1.79\\
		26	&71	&156	&2.19		&&69	&267	&552	&2.06		&&112	&501	&896	&1.78\\
		27	&75	&162	&2.16		&&70	&272	&560	&2.05		&&113	&507	&904	&1.78\\
		28	&79	&168	&2.12		&&71	&277	&568	&2.05		&&114	&513	&912	&1.77\\
		29	&83	&174	&2.09		&&72	&282	&576	&2.04		&&115	&519	&920	&1.77\\
		30	&87	&180	&2.06		&&73	&287	&584	&2.03		&&116	&525	&928	&1.76\\
		31	&91	&186	&2.04		&&74	&292	&592	&2.02		&&117	&531	&936	&1.76\\
		32	&95	&192	&2.02		&&75	&297	&600	&2.02		&&118	&537	&944	&1.75\\
		33	&99	&231	&2.33		&&76	&302	&608	&2.01		&&119	&543	&952	&1.75\\
		34	&103	&238	&2.31		&&77	&307	&616	&2.00		&&120	&549	&960	&1.74\\
		35	&107	&245	&2.28		&&78	&312	&624	&2\phantom{.00}	&&121	&555	&968	&1.74\\
		36	&111	&252	&2.27		&&79	&317	&632	&1.99		&&122	&561	&976	&1.73\\
		37	&115	&259	&2.25		&&80	&322	&640	&1.98		&&123	&567	&984	&1.73\\
		38	&119	&266	&2.23		&&81	&327	&648	&1.98		&&124	&573	&992	&1.73\\
		39	&123	&273	&2.21		&&82	&332	&656	&1.97		&&125	&579	&1000	&1.72\\
		40	&127	&280	&2.20		&&83	&337	&664	&1.97		&&126	&585	&1008	&1.72\\
		41	&131	&287	&2.19		&&84	&342	&672	&1.96		&&127	&591	&1016	&1.71\\
		42	&135	&294	&2.17		&&85	&347	&680	&1.95		&&	&	&	&\\
		43	&139	&301	&2.16		&&86	&352	&688	&1.95		&&	&	&	&\\
		\bottomrule
	\end{tabular}
\end{table}

%
%

\end{document}